\DeclareMathOperator*{\argmin}{arg\,min}
\DeclareMathOperator*{\argmax}{arg\,max}
\newtheorem{theorem}{\textbf{Theorem}}
\newtheorem{corollary}{\textbf{Corollary}}
\newtheorem{lemma}{\textbf{Lemma}}
\newcommand{\placetextbox}[3]{% \placetextbox{<horizontal pos>}{<vertical pos>}{<stuff>}
  \setbox0=\hbox{#3}% Put <stuff> in a box
  \AddToShipoutPictureFG*{% Add <stuff> to current page foreground
    \put(\LenToUnit{#1\paperwidth},\LenToUnit{#2\paperheight}){\vtop{{\null}\makebox[0pt][c]{#3}}}%
  }%
}%
\begin{document}
\placetextbox{0.5}{0.97}{\texttt{This work has been submitted to the IEEE for possible publication. Copyright may be }}%
\placetextbox{0.5}{0.954}{\texttt{transferred without notice, after which this version may no longer be accessible.}}%
% \placetextbox{0.5}{0.94}{\texttt{Copyright may be transferred without notice, after which this version may no longer be accessible.}}%
%
% paper title
% Titles are generally capitalized except for words such as a, an, and, as,
% at, but, by, for, in, nor, of, on, or, the, to and up, which are usually
% not capitalized unless they are the first or last word of the title.
% Linebreaks \\ can be used within to get better formatting as desired.
% Do not put math or special symbols in the title.
\title{Multi-Attribute Auctions for Efficient Operation of Non-Cooperative Relaying Systems}
%
%
% author names and IEEE memberships
% note positions of commas and nonbreaking spaces ( ~ ) LaTeX will not break
% a structure at a ~ so this keeps an author's name from being broken across
% two lines.
% use \thanks{} to gain access to the first footnote area
% a separate \thanks must be used for each paragraph as LaTeX2e's \thanks
% was not built to handle multiple paragraphs
%

\author{Winston~Hurst,~\IEEEmembership{Student Member,~IEEE,}
        and~Yasamin~Mostofi,~\IEEEmembership{Fellow,~IEEE}% <-this % stops a space
\thanks{Winston Hurst and Yasamin Mostofi are with the Department of Electrical and Computer Engineering, University of California, Santa Barbara, USA (email: \{winstonhurst, ymostofi\}@ece.ucsb.edu). This work was supported in part by ONR award N00014-23-1-2715.}% <-this % stops a space
}%\thanks{Manuscript received April 19, 2005; revised August 26, 2015.}}

\maketitle
% \pubidadjcol
% As a general rule, do not put math, special symbols or citations
% in the abstract or keywords.

%%%%%%%%%%%%%%%%%%%%%%%%%%%%%%%%%%%%%%%%%%%%%%%%%%%%%%%%%%%%%%%%
%%%%%%%%%%%%UNCOMMENT WHEN WE'RE READY%%%%%%%%%%%%%%%%%%%%%%%%%%
%%%%%%%%%%%%%%%%%%%%%%%%%%%%%%%%%%%%%%%%%%%%%%%%%%%%%%%%%%%%%%%%
\begin{abstract}
This paper studies the use of a multi-attribute auction in a communication system to bring about efficient relaying in a non-cooperative setting. We consider a system where a source seeks to offload data to an access point (AP) while balancing both the timeliness and energy-efficiency of the transmission. A deep fade in the communication channel (due to, e.g., a line-of-sight blockage) makes direct communication costly, and the source may alternatively rely on non-cooperative UEs to act as relays. We propose a multi-attribute auction to select a UE and to determine the duration and power of the transmission, with payments to the UE taking the form of energy sent via wireless power transfer (WPT). The quality of the channel from a UE to the AP constitutes private information, and bids consist of a transmission time and transmission power. We show that under a second-preferred-offer auction, truthful bidding by all candidate UEs forms a Nash Equilibrium. However, this auction is not incentive compatible, and we present a modified auction in which truthful bidding is in fact a dominant strategy. Extensive numerical experimentation illustrates the efficacy of our approach, which we compare to a cooperative baseline. We demonstrate that with as few as two candidates, our improved mechanism leads to as much as a $76\,$\% reduction in energy consumption, and that with as few as three candidates, the transmission time decreases by as much as $55\,$\%. Further, we see that as the number of candidates increases, the performance of our mechanism approaches that of the cooperative baseline. Overall, our findings highlight the potential of multi-attribute auctions to enhance the efficiency of data transfer in non-cooperative settings.
\end{abstract}

% Note that keywords are not normally used for peerreview papers.
% \begin{IEEEkeywords}
% \end{IEEEkeywords}
%%%%%%%%%%%%%%%%%%%%%%%%%%%%%%%%%%%%%%%%%%%%%%%%%%%%%%%%%%%%%%%%
%%%%%%%%%%%%%%%%%%%%%%%%%%%%%%%%%%%%%%%%%%%%%%%%%%%%%%%%%%%%%%%%

% For peer review papers, you can put extra information on the cover
% page as needed:
% \ifCLASSOPTIONpeerreview
% \begin{center} \bfseries EDICS Category: 3-BBND \end{center}
% \fi
%
% For peerreview papers, this IEEEtran command inserts a page break and
% creates the second title. It will be ignored for other modes.
\IEEEpeerreviewmaketitle

\section{Introduction}\label{sec:introduction}
%general WPT stuff
The realization of the 6G vision for communication networks requires the incorporation of new technologies and paradigms. For example, high-frequency mmWave and THz communications will greatly increase capacity, while densely connected and intelligent network topologies will improve energy and bandwidth efficiency through dynamic, distributed decision making~\cite{IMT-2030}. 

In this context, this paper considers a source which must send data to an access point (AP) while balancing time and energy considerations. We assume that the direct channel between the source and AP is of poor quality, due to, e.g., acute penetration loss caused by a line-of-sight (LOS) blockage on a mmWave channel \cite{Access2019Rappaport}. Rather than send the data directly, the source may transfer the data via one of several battery-powered candidate user equipments (UEs). Each candidate is modeled as non-cooperative, rational entity with private information about the channel quality between itself and the AP, and to motivate cooperation, the source must provide a ``payment" in the form of energy using wireless power transfer (WPT).

\begin{figure}
    \centering
    \includegraphics[width=0.9\linewidth, trim={0.1in, 0.1in, 0in, 0in}, clip]{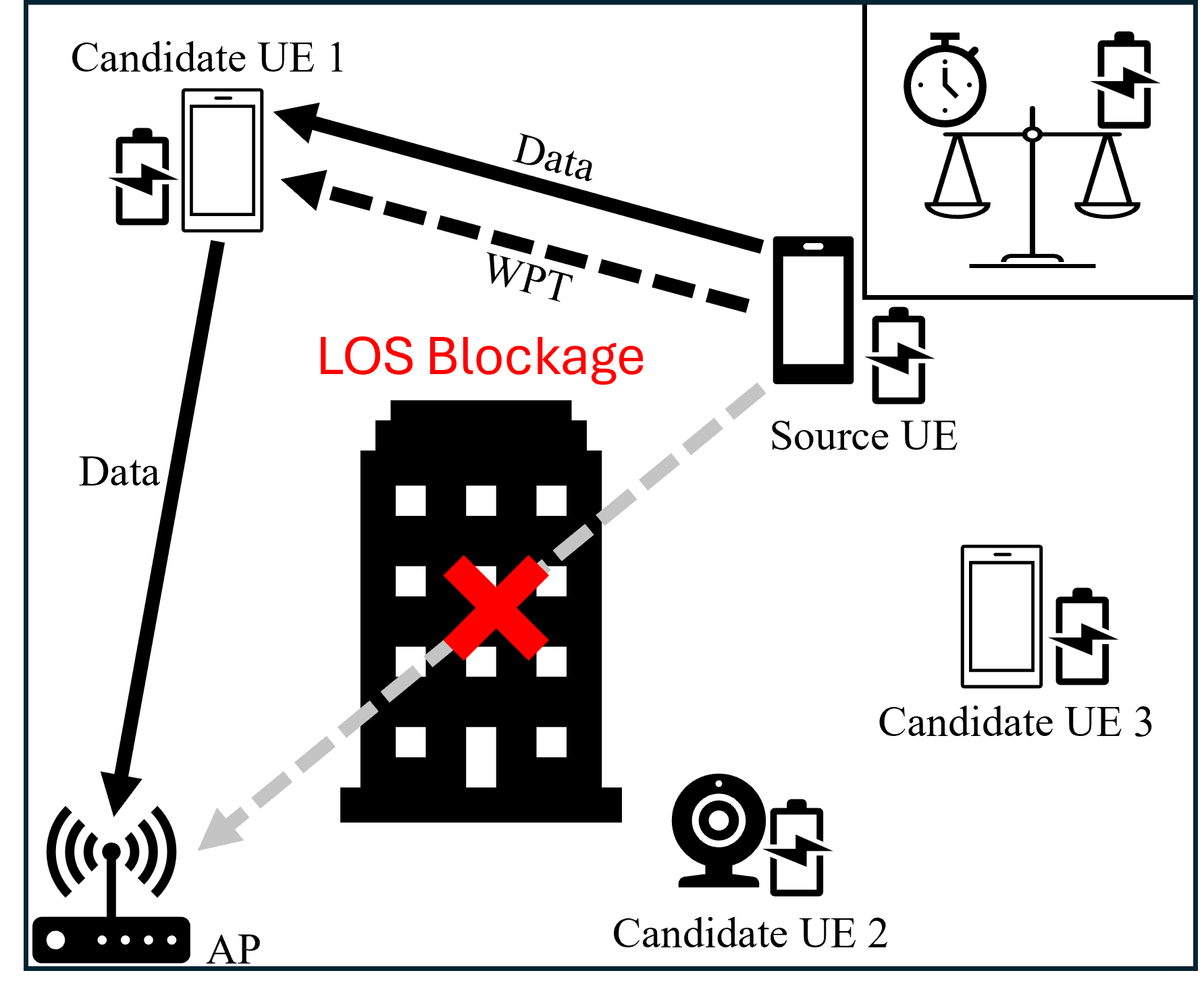}
    \vspace{-0.05in}
    \caption{Overview of the system under consideration. The source must transfer data to the AP while considering both timeliness and energy efficiency. Due to, e.g., a LOS blockage, quality of the source-to-AP channel is  poor, and the source may try to incentivize other nearby UEs to act as relays through payments of WPT.}
    \label{fig:overview}
    \vspace{-0.2in}
\end{figure}

%general lit
Our work differs from the existing literature in several ways. An extensive body of work focuses on WPT-enabled relaying in a cooperative setting \cite{TWC2014Ding, IoTJ2021LiWangLiuLiPengPiranLi, Ashraf2021Access}, in contrast to the non-cooperative setting we consider. These papers analyze and optimize the source and relays as a single unit, applying game theoretical approaches only to handle incomplete information \cite{TWC2014Ding}. On the other hand, several works propose the use of auctions in the context of providing WPT as a service \cite{Ni2019ICCT,Cheng2023TNSM}. In those papers, bidders buy energy (transferred via WPT) with monetary payments, but in this work, a payment of energy purchases a relaying service.

%our own lit.
In our recent work~\cite{hurst2024GLOBECOM}, we consider a similar setting with a hard time constraint, i.e., the source must offload the data within a certain time. This paper relaxes the time constraint, expressing the need for timely data transfer as a linear penalty term instead. This better models scenarios such as data uploading in sensor networks for non-real-time applications or asynchronous messaging. In this scenario, bids are expressed as pairs of values -- the time and the power to be used for transmission, and the resulting multi-attribute auction \cite{CheRAND1993, david2022AAMAS} requires significantly different analysis than our past work. While some of our analysis mirrors existing results such as \cite{CheRAND1993}, to our knowledge the structure of the utility functions we consider are not covered by any general theoretical works. We next state our contributions:

\begin{itemize}[leftmargin=0.15in]
    \item We present the problem of inducing efficient data offloading in a non-cooperative relaying scenario, considering both timeliness and energy consumption. The source must select a relaying candidate along with the transmission time and power, which together determine the energy payment to the candidate, while the candidates hold private information about the quality of their downlink channels to the AP. The source seeks to minimize the weighted sum of its energy consumption and the time required to complete the data transfer, while each relaying candidate seeks to maximize its net change in energy.
    \item For the non-cooperative setting, we propose a multi-attribute auction framework, and for comparison, we develop a cooperative baseline, which requires solving a mathematical optimization problem corresponding to finding the UEs' truthful bids. We prove that this problem is convex and derive a closed-form solution. 
    \item For a second-preferred-offer auction \cite{Desgagne1988Mimeo}, we prove that truthful bidding constitutes a Nash Equilibrium, but is not a dominant strategy (i.e., the auction is not incentive compatible). We then propose a modified auction, which we prove to be incentive compatible.    
    \item Finally, we present the results of several numerical simulations which validate the efficacy of our approach. These illustrate the impact of the number of candidates and the delay weighting factor on the timeliness and energy efficiency of the system as well as the net energy harvested by a candidate. Overall, our results show that the performance of the proposed approach in the non-cooperative settings approaches the performance of the cooperative baseline as the number of candidates increases.
\end{itemize}

% Contributions

\section{Modeling}
\vspace{-0.05in}
In this section, we present models for communication and WPT, as well as utility functions for the relevant entities. We consider a 2D coordinate system, where, without loss of generality, the access point (AP) is located at the origin, the source is located at a point, $q_s$, and $n$ candidate UEs are located at positions $\{q_i\}_{i\in\mathcal{N}}$, with $\mathcal{N} = \{1,...,n\}$. 

\subsection{Channel Modeling and Communication}
\vspace{-0.05in}
Let $H_s$, $H_i$, and $H_{s,i}$ represent the channel power between the source and the access point (AP), between a candidate and the AP, and between the source and a candidate, respectively. We assume that on any given link, both the transmitter and receiver have accurately estimated these coefficients. Specifically, the source knows $H_s$ and $H_{s,i}$ for all $i \in \mathcal{N}$, while candidate $i$ knows $H_{s,i}$ and $H_i$. However, the source does not have access to $H_i,\,\forall i \in \mathcal{N}$, and each candidate lacks knowledge of $H_s$ and $H_j,\,\forall j \in \mathcal{N} \setminus \{i\}$.

To quantify channel capacity, we apply the Shannon-Hartley theorem. The maximum achievable spectral efficiency (in bits/s/Hz) over the channel $l \in \{s, i, si\}$ is given by $r_{\text{max}} = \log_2\left(1+P_{\text{TX}}H_l/\sigma^2\right)$, where $P_{\text{TX}}$, $H_l$, and $\sigma^2$ denote the transmit power, channel power, and noise power at the receiver, respectively. It is further assumed that both the source and the candidates have a maximum transmit power of $P_{\text{max}}$. Thus, for a spectral efficiency $r$ to be feasible, it must satisfy $r \leq \log_2\left(1+P_{\text{max}}H_l/\sigma^2\right)$.

Let $D$ denote the bandwidth-normalized amount of data (in bits/Hz) the source seeks to offload. If the source offloads the data within a given amount of time, $T$, the minimum transmit power is $P_s(T) = (2^{D/T}-1)\sigma^2/H_s$, and the minimum total energy consumption is $T P_s(T)$. However, if $H_s$ is small due to, e.g., a LOS blockage between the source and the AP, $P_s(T)$ may be quite large or even exceed $P_{\text{max}}$, and relaying the data through a candidate may present a better option.

\subsection{Wireless Power Transfer}
\vspace{-0.05in}
If the source offloads data by relaying through candidate $i$, two transmissions occur: first, the source sends the data to the candidate; then the candidate sends the data to the AP. The latter requires that the candidate transmits with minimum power $P_i(T) = (2^{D/T}-1)\sigma^2/H_i$ for a given time frame, $T$. As we model the candidates as non-cooperative agents, they must be compensated for this energy expenditure, and when the source sends the data to the candidate, it must use enough power so that a portion of it may be harvested and reused.

The power harvested by the $i^\text{th}$ candidate, $P_{\text{harv},i}$, depends on the channel between the source and the candidate, $H_{s,i}$, the receiver antenna’s effective aperture, $A_r$, and the efficiency of the energy-harvesting circuitry, $\alpha$. We assume power-slitting WPT \cite{Zhou2012GLOBECOM}, with power $P_\text{WPT}$ dedicated for energy harvesting, so that the harvested power can be expressed as $P_{\text{harv},i} = H_{s,i}A_r\alpha P_{\text{WPT}} = \Tilde{\alpha}_iP_{\text{WPT}}$, where $\Tilde{\alpha}_i = H_{s,i}A_r\alpha$ represents the end-to-end WPT efficiency. 

 Let $P_{s,i}(T) = (2^{D/T}-1)\sigma^2/H_{s,i}$ denote the minimum power required to send data from the source to the $i^{\text{th}}$ candidate within time $T$, and let $P_{\text{tot}}$ denote the source's total transmit power. Assume the source sends data through candidate $i$ over time $T$, and let $P_\text{tot}^{i,\text{min}}(T) =  P_{si}(T) + P_i(T)/\tilde{\alpha}_i$. If $P_\text{tot} = P_\text{tot}^{i,\text{min}}(T)$, the $i^{\text{th}}$ candidate can harvest exactly enough energy to cover its transmission to the AP with no net energy harvested. If $P_\text{tot} > P_\text{tot}^{i,\text{min}}(T)$, the candidate harvests a net energy of $\tilde{\alpha}(P_\text{tot} - P_\text{tot}^{i,\text{min}}(T))$. If $P_{s,i}(t) \leq P_\text{tot} < P_\text{tot}^{i,\text{min}}(T)$, the candidate experiences a net loss of $\tilde{\alpha}(P_\text{tot}^{i,\text{min}}(T) - P_\text{tot} ) $. We assume $P_\text{tot} \geq P_\text{s,i}(T)$, as otherwise data transfer from the source to the candidate is unsuccessful. 

 %future work, account for delay due to relaying
\vspace{-0.1in}
\subsection{Utilities and Preferences}
When offloading data to the AP, the source faces an inherent trade-off between the timeliness of the data transmission and energy efficiency, since offloading data at a faster rate is achievable only if more energy is used. Rather than consider a hard time constraint as in \cite{hurst2024GLOBECOM}, we model this trade-off by defining the source's utility function as follows:
\vspace{-0.05in}
\begin{equation}\label{eq:source_cost}
    U_s = - T(\lambda + P_{\text{tot}}),
    \vspace{-0.05in}
\end{equation}
where $\lambda$, which we refer to as the \textit{delay power}, is a weighting factor on the time it takes to complete data offloading. If data must be offloaded quickly, \textit{e.g.}, for safety critical uses, $\lambda$ is large, but if there is no pressing need for data transfer, $\lambda$ is small. On the other hand, each candidate's utility is its net change in energy. We assume that the transmission is sufficiently directional so that a candidate harvests energy only if it is chosen to relay. Let $\rho$ indicate the index of the chosen candidate. The utility of candidate $i$ is given by 
\begin{equation*}
    U_i = \begin{cases}
        0 & \text{if } \rho \neq i\\
        T\tilde{\alpha_i}\left(P_\text{tot} - P_\text{tot}^{i,\text{min}}(T)\right)& \text{if } \rho = i
    \end{cases},\quad \forall i \in \mathcal{N}.
\end{equation*}
We next models this scenario with a multi-attribute auction.

%%%%%%%%%%%%%%%%%%%%%%%%%%%%%%%%%%%%%%%%%%%%%%%%%%%%%%%%%%%%%
%%%%%%%%%%%%%%%%%%%%%%%%%%%%%%%%%%%%%%%%%%%%%%%%%%%%%%%%%%%%%
%%%%%%%%%%%%%%%%%%%%%%%%%%%%%%%%%%%%%%%%%%%%%%%%%%%%%%%%%%%%%
%Protocol introduction
%%%%%%%%%%%%%%%%%%%%%%%%%%%%%%%%%%%%%%%%%%%%%%%%%%%%%%%%%%%%%
%%%%%%%%%%%%%%%%%%%%%%%%%%%%%%%%%%%%%%%%%%%%%%%%%%%%%%%%%%%%%
%%%%%%%%%%%%%%%%%%%%%%%%%%%%%%%%%%%%%%%%%%%%%%%%%%%%%%%%%%%%%
\section{An Auction-based Protocol}\label{sec:auction_base_protocol}
In this paper, we consider the problem of designing a mechanism which results in efficient data offloading from the source to the AP. This protocol must handle the non-cooperative nature of the candidates and decentralized decision making under incomplete information. In this section, we first present a multi-attribute auction framework which allows us to reason about the behavior of the source and candidates in a rigorous way. We then introduce a perfect information baseline and an accompanying core optimization problem relevant in our subsequent discussion.
\vspace{-0.1in}
\subsection{Multi-attribute Auctions}
\vspace{-0.05in}
Much of auction theory focuses on auctions in which bids take the form of a single value, usually a monetary payment, but in some scenarios, such as the provisioning of government contracts~\cite{CheRAND1993} or cloud computing services \cite{sharghivand2021Access}, the auctioneer must consider other aspects of a bid (e.g., the quality of a service provided) in addition to the cost. In these \textit{multi-attribute auctions}, a bid is a vector where each entry corresponds to a value for one of the attributes. The auctioneer has a \textit{scoring function}, which maps the vector-valued bids to a scalar score. This scoring function may be the same as the auctioneer's utility function. As in single-attribute auctions, a \textit{mechanism} defines the rules of the auctions, i.e., who wins and the final values of each attribute.

This multi-attribute auction framework captures the important features of our WPT-incentivized relaying scenario. We specifically consider a reverse Vickrey multi-attribute auction, so that the bid giving the lowest score wins, but the final payment corresponds to the second-lowest bid. Within the context of the communication system, the auction proceeds as follows:

\begin{enumerate}[leftmargin=0.2in]
    \item The source calculates its maximum utility if it tries to communicate directly (i.e., its \textit{reservation price}). We denote the corresponding power-time pair as $b_0 = (T_0^*, P_{\text{tot},0}^*)$.
    \item The source broadcasts a request for bids. The request includes the value of $\lambda$. 
    \item The candidates submit bids in the form of time-power pairs, $b_i = (T_i, P_{\text{tot},i}),\,\forall i \in \mathcal{N}$.
    \item The source uses the utility function, $U_s$, to score the bids, and the winner is $i^* = \argmin_{i\in\mathcal{N}\cup \{0\}} U_s(T_i, P_{\text{tot},i})$. For ease of exposition, we ignore the possibility of ties, which can be handled trivially. 
    \item The source transmits with choices of $T$ and $P_\text{tot}$ that satisfy $U_s(T, P_\text{tot}) = U_s(T_{\hat{i}}, P_{\text{tot},{\hat{i}}})$, where $\hat{i} = \min_{i\neq i^*} U_s(T_i, P_{\text{tot},i})$.
\end{enumerate}

Note that multiple values of $(T, P_\text{tot})$ satisfy $U_s(T, P_\text{tot}) = U_s(T_{\hat{i}}, P_{\text{tot},{\hat{i}}})$, but for the auction to be completely defined, we must give specific values. This adds additional complexity to multi-attribute auctions when compared to typical auctions in which bids are a single value, as in \cite{hurst2024GLOBECOM}. In the next section, we present two specific choices for these final payments and mathematically characterize their properties. However, we first present a perfect information baseline and introduce a core optimization problem that is central to later analysis.

% \vspace{-0.1in}
\subsection{Perfect Information Baseline}
\vspace{-0.05in}
To better understand the challenges and limits of performance in this system, we consider a perfect-information baseline, in which the source knows the value of $H_i$ for each of the candidates, either via an oracle or via the candidates themselves in a cooperative setting. To facilitate discussion, let $z_i$ denote an end-to-end channel quality in an abstract sense, with $z_0 = \sigma^2/H_s$ and $z_i = \sigma^2(1/H_{si} + 1/(\tilde{\alpha}_iH_i)),\,\forall i \in \mathcal{N}$. In this scenario, the source can find the value of $T$ and $P_{\text{tot}}$ for each candidate, and for itself, that maximizes the source's utility, $U_s$, while accounting for the agents' rationality, i.e., ensuring that $U_i \geq 0$. Formally, this problem is expressed as:
\vspace{-0.05in}
\begin{mini}
    {T, P_{\text{tot}}}%vars
    {T(\lambda + P_{\text{tot}})}%expression
    {}%label
    {}%solution
    \addConstraint{P_{\text{tot}} \geq P_{\text{tot}}^{i,\text{min}}(T)}
    \addConstraint{P_{\text{max}} \geq P_{\text{tot}}} 
    \addConstraint{T >0},
\end{mini}
where the first constraint ensures that the data can be transferred within the chosen time while ensuring no energy loss for the candidate, and the second constraint ensures that the maximum transmit power is respected. Since the objective is increasing in $P_{\text{tot}}$, the first constraint is active for the optimal values of $T$ and $P_{\text{TX}}$. This allows us to restate the problem as
\begin{mini}
    {T}%vars
    {T\left(\lambda + (2^{D/T}-1)z\right)}%expression
    {\label{prob:min_tau}}%label
    {T^*(z) = }%solution
    \addConstraint{T > D/\log_2\left( 1+P_{\text{max}}/z\right)},
\end{mini}
where we use the generic argument, $z$, for ease of exposition in the discussion that follows. Solving (\ref{prob:min_tau}) gives the optimal transmit time, $T^*(z)$, the optimal transmit power, $P_{\text{tot}}^*(z) = (2^{D/T^*(z)}-1)z$, and the resulting utility for the source, $v(z)= T^*(z) (\lambda +  P_{\text{tot}}^*(z) )$. We next derive a closed-form expression for the solution to (\ref{prob:min_tau}). To do so, we first prove its convexity.
\begin{lemma}\label{lma:min_tau_convex}
    Problem (\ref{prob:min_tau}) is convex.
\end{lemma}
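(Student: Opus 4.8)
The plan is to show that the objective of (\ref{prob:min_tau}) is a convex function of $T$ on $(0,\infty)$ and that the feasible set is an interval, hence convex; together these give convexity of the program.

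First I would expand the objective as
\begin{equation*}
    f(T) = T\bigl(\lambda + (2^{D/T}-1)z\bigr) = (\lambda - z)\,T + z\,T\,2^{D/T}.
\end{equation*}
The term $(\lambda-z)T$ is affine in $T$, hence convex, and $z>0$, so it remains only to establish that $g(T) = T\,2^{D/T}$ is convex on $(0,\infty)$.

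For this, the cleanest route is to recognize $g$ as (a restriction of) a perspective function: with $\phi(x) = 2^{Dx} = e^{(D\ln 2)x}$, which is convex as the exponential of an affine map, the perspective $(x,t)\mapsto t\,\phi(x/t)$ is convex on $\{t>0\}$, and fixing $x=1$ yields exactly $g(t) = t\,2^{D/t}$; restricting a convex function to a line preserves convexity. Alternatively, one can verify this by direct differentiation: writing $c = D\ln 2 > 0$ and $g(T) = T e^{c/T}$, one obtains $g'(T) = e^{c/T}(1 - c/T)$ and $g''(T) = e^{c/T}\,c^2/T^3 > 0$ for every $T>0$.

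Finally I would observe that the feasible set $\{T : T > D/\log_2(1+P_{\text{max}}/z)\}$ is an open half-line and hence convex, so (\ref{prob:min_tau}) minimizes a convex function over a convex set and is therefore convex. The only mildly delicate point is the convexity of $T\,2^{D/T}$; everything else is immediate, and the perspective-function observation disposes of it without any computation, so I do not anticipate a real obstacle here.
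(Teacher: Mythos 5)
Your proof is correct. The paper's own argument is the blunt one: it differentiates the full objective $g(T)=T(\lambda+(2^{D/T}-1)z)$ twice and observes $g''(T)=z(D\ln 2)^2 2^{D/T}/T^3>0$ for $T>0$, $z>0$, then notes the linear constraint keeps $T$ positive. Your decomposition into the affine part $(\lambda-z)T$ plus $z\,T2^{D/T}$, with the latter recognized as the perspective of the convex function $x\mapsto 2^{Dx}$ restricted to the line $x=1$, is a genuinely different and computation-free route; it also makes transparent \emph{why} the objective is convex (it is affine plus a positive multiple of a perspective of an exponential) rather than just certifying a sign. Your fallback differentiation of $T e^{cT^{-1}}$ with $c=D\ln 2$ reproduces exactly the paper's second derivative once multiplied by $z$, so the two arguments agree where they overlap. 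Your explicit remark that the feasible set is an open half-line is slightly more careful than the paper's, which only notes $T>0$; both suffice since the objective is convex on all of $(0,\infty)$. No gaps.
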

\begin{proof}
    Let $g(T) = T(\lambda + (2^{D/T}-1)z)$. Then we have
    \begin{equation} \label{eq:dg_dtau}
    \begin{split}
        \frac{d g}{d T}&= z 2^{D/T} \left(1 - \frac{D\ln{2}}{T}\right) + \lambda - z\\
        \frac{d^2 g}{d T^2}& = z \frac{(D \ln{2})^2}{T^3} 2^{D /T} > 0,\;\; \forall T >0,\;\forall z> 0.
    \end{split}
    \end{equation}
    Furthermore, the linear constraint ensures that $T >0$. Thus, the objective is convex over the domain of the problem.
\end{proof}

As we have proved the convexity of the problem, either the constraint is tight, i.e., $T^* = D(\log_2\left( 1+P_{\text{max}}/z\right)^{-1}$ or $T^*$ satisfies $dg(T^*)/dT = 0$. Assuming the latter, we have
\begin{equation*}
    \begin{split}
        &2^{D/T^*}\left(1-\frac{D\ln (2)}{T^*}\right) = 1 - \lambda/z \implies \\
        &T^* = \frac{D\ln(2)}{ W_0\left(\left(\frac{\lambda}{z} - 1\right)e^{-1}\right) +1}
    \end{split}
\end{equation*}
where $W_0(\cdot)$ is the principal branch of the Lambert W function. Importantly, $\lambda/z -1 > -1$, so that $W_0((\lambda/z -1)e^{-1})$ is always well defined. The final expression for $T^*(z)$ is
\begin{equation}\label{eq:taU_star}
    T^*(z) = \max \left(\frac{D\ln(2)}{ W_0\left(\left(\frac{\lambda}{z} - 1\right)e^{-1}\right) +1},\;\frac{D}{\log_2\left( 1+P_{\text{max}}/z\right)}\right)
\end{equation}

This problem is solved for all $i$ to find $T_i^* = T^*(z_i)$, $P_{\text{tot},i}^* = P_{\text{tot}}^*(z_i)$ and $v_i = v(z_i)$ for all candidates as well as the source, and the source can simply transmit for a duration $T_{i^{\dag}}^*$ with power $P_{\text{tot},i^{\dag}}^*$, where  $i^{\dag} = \argmin_i \{v_i\}_{i=0}^n$. This maximizes the sources utility while ensuring all candidates have non-negative utility, and the bid $b_i^* = (T_i^*, P_{\text{tot},i}^*)$ is the truthful bid, corresponding to revealing the private information, $H_i$. We next analyze the system performance in the non-cooperative setting with imperfect information.

%%%%%%%%%%%%%%%%%%%%%%%%%%%%%%%%%%%%%%%%%%%%%%%%%%%%%%%%%%%%%%%%
%%%%%%%%%%%%%%%%%%%%%%%%%%%%%%%%%%%%%%%%%%%%%%%%%%%%%%%%%%%%%%%%
%%%%%%%%%%%%%%%%%%%%%%%%%%%%%%%%%%%%%%%%%%%%%%%%%%%%%%%%%%%%%%%%
% Mechanism Analysis
%%%%%%%%%%%%%%%%%%%%%%%%%%%%%%%%%%%%%%%%%%%%%%%%%%%%%%%%%%%%%%%%
%%%%%%%%%%%%%%%%%%%%%%%%%%%%%%%%%%%%%%%%%%%%%%%%%%%%%%%%%%%%%%%%
%%%%%%%%%%%%%%%%%%%%%%%%%%%%%%%%%%%%%%%%%%%%%%%%%%%%%%%%%%%%%%%%
\section{Analysis of Auction}\label{sec:analysis}
To fully define the auction-based protocol introduced in the Section~\ref{sec:auction_base_protocol}, the exact choice of $T$ and $P_\text{tot}$ requires careful consideration. In this section, we first consider the case where the final values of  $(T, P_\text{tot})$ are exactly $ (T_{\hat{i}}, P_{\text{tot},\hat{i}})$, sometimes referred to as a \textit{second-preferred-offer auction}~\cite{Desgagne1988Mimeo}. We then propose a slightly modified auction, which we show to be incentive compatible. In our discussion we let  $B = (b_0, b_1, ..., b_n)$ denote a \textit{strategy profile}, with $b_0 = (T_0^*, P_{\text{tot},0}^*)$, and $b_i=(T_i, P_{\text{tot}, i})$.

\subsection{The Second-Preferred-Offer Auction}
A second-preferred-offer auction is formally expressed as:
 \begin{equation}\label{eq:intuitive_mechanism}
    \begin{split}
        \rho(B) &= i^* = \argmax_{i\in\mathcal{N}\cup \{0\}}U_s(T_i, P_{\text{tot},i})\\
        X(B) &= b_{\hat{i}} = (T_{\hat{i}}, P_{\text{tot},\hat{i}}),
    \end{split}
    \end{equation}
where $\rho(B)$ gives the winning candidate (or the source) and $x(B)$ gives the source transmit time, $T$, and power, $P_{\text{tot}}$. We prove that under this mechanism, the strategy profile $B^* = (b_1^*, ..., b_n^*),\, b_i^* = (T_i^*, P_{\text{tot}, i}^*)$ constitutes a Nash Equilibrium. To do so, we first derive necessary intermediate results.

Consider (\ref{prob:min_tau}). This problem gives a general mapping between a any value of the $z$ and the corresponding optimal time $T^*(z)$. We now prove important properties of $T^*(z)$ and $P_{\text{tot}}^*(z)$, which we later use to develop our main results.  
\begin{lemma}\label{lma:increasing_in_z}
    Optimal relay time, $T^*(z)$, is increasing in $z$.
\end{lemma}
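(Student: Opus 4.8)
The plan is to argue directly from the closed-form expression (\ref{eq:taU_star}), which writes $T^*(z)$ as the pointwise maximum of two functions of $z$. Since the pointwise maximum of (weakly) increasing functions is increasing --- for $z_1<z_2$, each term evaluated at $z_2$ dominates its value at $z_1$, hence so does the max --- it suffices to show that each of the two terms is increasing in $z$, with no need to track which branch is active.

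For the constraint-boundary term $D/\log_2(1+P_{\max}/z)$: as $z$ grows, $P_{\max}/z$ decreases, so $\log_2(1+P_{\max}/z)$ decreases while remaining strictly positive, and hence $D$ divided by it increases. For the interior-optimum term $D\ln 2/\bigl(W_0((\lambda/z-1)e^{-1})+1\bigr)$: as $z$ grows, $\lambda/z$ shrinks, so the argument $(\lambda/z-1)e^{-1}$ strictly decreases; it stays in $(-1/e,\infty)$, since $\lambda/z>0$ forces $(\lambda/z-1)e^{-1}>-1/e$, so we remain on the domain of the principal branch. Because $W_0$ is monotonically increasing on $[-1/e,\infty)$, the quantity $W_0((\lambda/z-1)e^{-1})$ decreases; and because $W_0>-1$ strictly on $(-1/e,\infty)$, the denominator $W_0(\cdot)+1$ is positive and decreasing, so the term increases. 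Combining the two, $T^*(z)$ is increasing (in fact strictly, since both terms are).

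The only delicate points are the bookkeeping around the Lambert-$W$ domain and the positivity of the denominator, which I regard as the main obstacle, but these are routine. As an alternative that sidesteps the closed form, one could instead invoke monotone comparative statics on $g(T,z)=T\lambda+T(2^{D/T}-1)z$: a short computation gives $\partial^2 g/\partial T\,\partial z = 2^{D/T}(1-(D\ln 2)/T)-1$, which with $x=D/T$ equals $2^x(1-x\ln 2)-1$; this vanishes at $x=0$ and has derivative $-x(\ln 2)^2 2^x<0$, hence is negative for all $T>0$. Thus $\partial g/\partial T$ is strictly decreasing in $z$, so raising $z$ moves the interior stationary point to a larger $T$, and since the feasible lower bound $D/\log_2(1+P_{\max}/z)$ is itself increasing in $z$, the minimizer $T^*(z)$ is increasing either way. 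I would present the closed-form argument as the proof proper and keep the comparative-statics version as a one-line remark.
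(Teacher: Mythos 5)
Your proof is correct, but it takes a genuinely different route from the paper's. The paper works from the first-order condition: in the interior regime it inverts the stationarity equation to write $z$ as a function of $T^*$ and shows $dz/dT^*>0$ (this computation, Eq.~(\ref{eq:dx_dt_inactive}), is then reused in the proof of Lemma~\ref{lma:P_tot_in_z}), it differentiates the constraint boundary directly in the active regime, and it glues the two regimes together by appealing to continuity of (\ref{eq:taU_star}). You instead read the monotonicity straight off the closed form: each branch of the max in (\ref{eq:taU_star}) is shown to be strictly increasing (the boundary term by elementary monotonicity of $\log_2(1+P_{\max}/z)$, the interior term via monotonicity of $W_0$ on $(-1/e,\infty)$ together with $W_0>-1$ there), and the pointwise maximum of increasing functions is increasing. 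Your bookkeeping on the Lambert-$W$ domain matches what the paper itself observes ($\lambda/z-1>-1$), and your max-of-increasing-functions step is a cleaner way to handle the switch between regimes than the paper's continuity remark, since it never requires identifying which branch is active. Your comparative-statics aside is also sound --- the cross-partial $2^{D/T}(1-D\ln 2/T)-1$ is indeed negative for $T>0$, so $g$ has strictly decreasing differences in $(T,z)$ and the (convex-in-$T$) minimizer moves up with $z$ --- and it has the virtue of not depending on the closed form at all. The only thing your route forgoes is the explicit expression for $dz/dT^*$ that the paper recycles in the next lemma; as a standalone proof of this statement, yours is complete and arguably tidier.
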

\begin{proof}
    First, note that (\ref{eq:taU_star}) is continuous for $z >0$.
    From (\ref{eq:dg_dtau}), we have $ z = -\lambda(2^{D/T^*}(1 - D\ln{2}/T^*) - 1)^{-1}$ when the constraint is inactive, and taking the derivative with respect to $T^*$, we find
    \begin{equation}\label{eq:dx_dt_inactive}
        \frac{d z}{d T^*} = \frac{\lambda (D\ln{2})^2 2^{D/T^*}}{(2^{D/T^*}(1 - D\ln(2)/T^*)-1)^2 (T^*)^3} >0.
    \end{equation}
    If the constraints is active, taking the derivative, we find
    \begin{equation*}
        \frac{d T^*}{d z} = \frac{DP_{\text{max}}}{\left(\log_2(1+P_{\text{max}}/z)\right)^2 z \ln(2)(P_\text{max} + z)} >0
    \end{equation*}
    Thus, $T^*(z)$ is increasing in $z$, and we can further show that the optimized power, $P_{\text{tot}}^*(z)$, is increasing in $z$ as well.
\end{proof}

\begin{lemma}\label{lma:P_tot_in_z}
    The optimal transmit power $P_{\text{tot}}^*(z) = (2^{D/T^*(z)}-1) z$ is non-decreasing in $z$.
\end{lemma}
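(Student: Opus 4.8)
The plan is to exploit the closed form~\eqref{eq:taU_star} for $T^*(z)$ together with the monotonicity from Lemma~\ref{lma:increasing_in_z}, splitting on whether the power constraint in~\eqref{prob:min_tau} is active. If it is active, then $T^*(z) = D/\log_2(1+P_{\text{max}}/z)$, so $2^{D/T^*(z)}-1 = P_{\text{max}}/z$ and therefore $P_{\text{tot}}^*(z) = (P_{\text{max}}/z)\,z = P_{\text{max}}$, a constant, which is (weakly) non-decreasing.

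The substantive case is when the constraint is inactive, so that $T^*(z)$ satisfies the stationarity condition $dg(T^*)/dT = 0$ in~\eqref{eq:dg_dtau}. I would substitute $u = D/T^*(z) > 0$; rearranging the stationarity condition gives $z = \lambda / \bigl(1 - 2^u(1-u\ln 2)\bigr)$, and plugging this into $P_{\text{tot}}^*(z) = (2^{D/T^*(z)}-1)z$ yields the explicit expression
\begin{equation*}
    P_{\text{tot}}^*(z) = \lambda\,\phi(u), \qquad \phi(u) = \frac{2^u - 1}{1 - 2^u + u\ln 2\,\cdot 2^u}.
\end{equation*}
By Lemma~\ref{lma:increasing_in_z}, $T^*(z)$ is increasing, so $u$ is decreasing in $z$; hence it suffices to show $\phi$ is non-increasing in $u$ on $(0,\infty)$. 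Writing $h(u)$ for the denominator of $\phi$, one checks $h(0)=0$ and $h'(u) = u(\ln 2)^2 2^u > 0$, so $h>0$ on $(0,\infty)$; differentiating with the quotient rule and simplifying collapses the numerator of $\phi'(u)$ to $2^u\ln 2\,\bigl(1 - 2^u + u\ln 2\bigr)$. Setting $\psi(u) = 1 - 2^u + u\ln 2$, we have $\psi(0) = 0$ and $\psi'(u) = \ln 2\,(1-2^u) < 0$ for $u>0$, so $\psi(u) < 0$ and thus $\phi'(u) < 0$ on $(0,\infty)$. Therefore $P_{\text{tot}}^*(z)$ is strictly increasing in $z$ wherever the constraint is inactive.

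To conclude, I would glue the two cases with a continuity argument. The function $T^*(z)$ is continuous on $z>0$ (a maximum of two continuous functions in~\eqref{eq:taU_star}), hence so is $P_{\text{tot}}^*(z)$. Moreover, the constraint in~\eqref{prob:min_tau} is active exactly when the unconstrained stationary power $\lambda\phi(u)$ would reach $P_{\text{max}}$, and since that quantity is increasing in $z$ its sublevel set is an interval $(0,z^\dagger)$: so the constraint is inactive for small $z$ (good end-to-end channel) and active for $z \ge z^\dagger$, with $P_{\text{tot}}^*$ increasing up to $P_{\text{max}}$ on $(0,z^\dagger)$ and equal to $P_{\text{max}}$ on $[z^\dagger,\infty)$. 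Combining piecewise monotonicity with continuity at $z^\dagger$ gives that $P_{\text{tot}}^*$ is non-decreasing on all of $(0,\infty)$.

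The main obstacle is the inactive case: one must correctly eliminate $T^*$ in favor of $z$ via the stationarity condition to reach $\lambda\phi(u)$, and then push the quotient-rule differentiation of $\phi$ through carefully enough to see that its sign is controlled entirely by the elementary function $\psi(u) = 1 - 2^u + u\ln 2$. The algebraic bookkeeping in that simplification is the most error-prone step, though the final reduction is clean; verifying that the inactive/active regions split $(0,\infty)$ into two intervals (needed for the gluing step) is a minor point that itself follows from the monotonicity established on the inactive branch.
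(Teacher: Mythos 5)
Your proof is correct and follows essentially the same route as the paper's: both split on whether the power constraint is active (where $P_{\text{tot}}^*=P_{\text{max}}$ is constant), use the stationarity condition to handle the inactive branch, and reduce the sign question to the same elementary inequality $e^{\omega}-\omega-1>0$ with $\omega=D\ln 2/T^*=u\ln 2$, finishing by continuity. The only cosmetic difference is that you differentiate the closed form $\lambda\phi(u)$ in $u$ and invoke the monotonicity of $T^*(z)$ from Lemma~\ref{lma:increasing_in_z}, whereas the paper chain-rules $dP_{\text{tot}}^*/dz$ directly; your explicit gluing of the two regions is a welcome extra detail the paper leaves implicit.
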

\begin{proof}
     Note that in the regime where the constraint is active, $P_{\text{tot}}^*(z) = P_{\text{max}}$, and when the constraint is inactive, $P_{\text{tot}}^*(z) < P_{\text{max}}$. Furthermore, when the constraint is inactive,
    \begin{equation*}
        \frac{d P_{\text{tot}}^*(z)}{d z} = (2^{D/T^*(z)}-1) + z\left(\frac{-D 2^{D/T^*(z)} \ln{2}}{T^*(z)^2}\right)\frac{dT^*}{dz}
    \end{equation*}
    Using (\ref{eq:dx_dt_inactive}) and (\ref{eq:dg_dtau}), we find that 
    \vspace{-0.1in}
    \begin{equation*}
        z\frac{dT^*(z)}{dz} = \frac{-(2^{D/T}(1-\frac{D\ln(2)}{T} -1)T^3}{(D\ln(2))^2 2^{D/T}},
        \vspace{-0.05in}
    \end{equation*}
    which gives
    \begin{equation*}
    \begin{split}
       \frac{d P_{\text{tot}}^*(z)}{d z} & = \frac{T}{D\ln(2)} (e^{D\ln(2)/T}-1 - \frac{D\ln(2)}{T}) 
    \end{split}
    \vspace{-0.1in}
    \end{equation*}
    Letting $\omega = \frac{D\ln(2)}{T}$, we have $d P_{\text{tot}}^*(z)/d(z) = (1/\omega)(e^{\omega} - \omega - 1 > 0,\;\forall \omega > 0$.  Finally, note that $P_{\text{tot}}^*(z)$ is continuous even at points where $T^*(z)$ is not differentiable.
    \vspace{-0.05in}
\end{proof}

As a consequence, the optimal source utility as a function of $z$, $v(z) = T^*(z)(\lambda + P_{\text{tot}}^*(z))$ is increasing in $z$. We next use this to prove a central result underpinning our analysis.

\vspace{-0.05in}
\begin{theorem}\label{thrm:manifold}
    If all bids are restricted to the set described by solutions to (\ref{prob:min_tau}), i.e., $b_i\in\mathcal{B} = \{(T^*(z), P_{\text{tot}}^*(z)) | z > 0\},\,\forall i \in \mathcal{N}$, then the strategy $b_i^* = (T_i^*, P_{\text{tot},i}^*)$ is a dominant strategy.
\end{theorem}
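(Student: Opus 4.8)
The plan is to show that, once bids are confined to $\mathcal{B}$, the mechanism~(\ref{eq:intuitive_mechanism}) is nothing but a reverse Vickrey (second-price) auction in a single scalar parameter, for which truthfulness is dominant; the work lies in setting up that reduction and in computing the winner's realized payoff.

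First I would note that $z\mapsto(T^*(z),P_{\text{tot}}^*(z))$ is injective on $(0,\infty)$: by Lemma~\ref{lma:increasing_in_z}, together with the strict positivity of the derivatives displayed in its proof, $T^*$ is strictly increasing. Hence a bid $b_i\in\mathcal{B}$ is equivalent to announcing a single number $\hat z_i>0$ with $b_i=(T^*(\hat z_i),P_{\text{tot}}^*(\hat z_i))$, and the source's score of that bid is $U_s(b_i)=-v(\hat z_i)$ with $v(z)=T^*(z)(\lambda+P_{\text{tot}}^*(z))$. Since $v$ is strictly increasing in $z$ (the remark following Lemma~\ref{lma:P_tot_in_z}, with strictness coming from $T^*$ strictly increasing and $\lambda+P_{\text{tot}}^*$ positive and non-decreasing by Lemma~\ref{lma:P_tot_in_z}), maximizing $U_s$ is the same as minimizing the announced parameter. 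So the winner $i^*$ is the agent with the smallest $\hat z_j$ over $j\in\mathcal{N}\cup\{0\}$ (ties resolved as stipulated), and the reservation bid $b_0$ simply contributes a fixed competing announcement $\hat z_0=z_0$.

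Next I would compute candidate $i$'s payoff as a function of $\hat z_i$, holding the other announcements fixed and writing $m=\min_{j\neq i}\hat z_j$ for the best competing parameter. If $\hat z_i>m$, candidate $i$ loses and $U_i=0$. If $\hat z_i<m$, candidate $i$ wins, and under the second-preferred-offer rule~(\ref{eq:intuitive_mechanism}) the source transmits with the runner-up's bid, i.e., $T=T^*(m)$ and $P_{\text{tot}}=P_{\text{tot}}^*(m)=(2^{D/T^*(m)}-1)m$, while candidate $i$'s relaying cost is $P_{\text{tot}}^{i,\text{min}}(T^*(m))=(2^{D/T^*(m)}-1)z_i$; substituting into the definition of $U_i$ gives
\begin{equation*}
    U_i=T^*(m)\,\tilde{\alpha}_i\,(2^{D/T^*(m)}-1)\,(m-z_i),
\end{equation*}
which is independent of $\hat z_i$ and has the sign of $m-z_i$ (the boundary case $\hat z_i=m$ returns either this value, which is then $0$, or $0$, and needs no separate treatment).

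Finally I would run the standard case check against the truthful announcement $\hat z_i=z_i$. If $z_i<m$, truthful wins and earns the positive amount above, which is the best payoff available (every winning announcement earns the same, every losing one earns $0$). If $z_i>m$, truthful loses and earns $0$, whereas the only way to win is to announce below $m$, which yields a strictly negative payoff; truthful is again optimal. If $z_i=m$, every outcome pays $0$. Hence $\hat z_i=z_i$ --- that is, the bid $b_i^*=(T_i^*,P_{\text{tot},i}^*)$ --- is a best response whatever the others do, so it is a dominant strategy. I expect the only real obstacle to be bookkeeping: correctly turning the rule ``final $(T,P_{\text{tot}})$ equals the runner-up's bid'' into the clean payoff formula above, and verifying that the scoring order genuinely coincides with the order of the scalar parameters so that the winner/runner-up structure is exactly that of a second-price auction; beyond the monotonicity facts already established in Lemmas~\ref{lma:increasing_in_z}--\ref{lma:P_tot_in_z} no estimates are required.
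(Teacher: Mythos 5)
Your proposal is correct and follows essentially the same route as the paper's proof: reduce each bid in $\mathcal{B}$ to a scalar announcement $\tilde{z}_i$ via the monotonicity of $T^*$ and $P_{\text{tot}}^*$ (Lemmas~\ref{lma:increasing_in_z} and \ref{lma:P_tot_in_z}), observe that the winner is $\argmin_i \tilde{z}_i$, write the winner's payoff as $T_{\hat{i}}\tilde{\alpha}_i(2^{D/T_{\hat{i}}}-1)(\tilde{z}_{\hat{i}} - z_{i^*})$, and conclude by the standard Vickrey argument. Your write-up is somewhat more explicit in the final case analysis (the paper states the over/under-bidding consequences informally), but the decomposition and key payoff formula are identical.
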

\begin{proof}
    First, let $\tilde{z}_i$ satisfy $b_i =(T_i, P_{\text{tot},i}) = (T_i^*(\tilde{z}_i), P_{\text{tot},i}^*(\tilde{z}_i))$, and note that $\tilde{z}_i$ may differ from $z_i$. Under the mechanism described in (\ref{eq:intuitive_mechanism}), if $i^* \neq 0$, we may express the utility of the winning candidate as $U_{i^*} =            T_{\hat{i}}\tilde{\alpha}_i(2^{D/T_{\hat{i}}}-1)(\tilde{z}_{\hat{i}} - z_{i^*}),$
        while for all other candidates, $U_i = 0$.
        Thus, if $\tilde{z}_i > z_{i^*}$, the winning candidate experiences a net gain in energy, while if $\tilde{z}_i < z_{i^*}$, the candidate experiences a net loss in energy. Further note that, from Lemmas~\ref{lma:increasing_in_z} and \ref{lma:P_tot_in_z}, $\rho(B) = \argmin_i \tilde{z}_i$. Thus, if $\tilde{z}_{i^*} > z_{i^*}$, the winner may not win in cases where it could have won and received a positive utility, and if $\tilde{z}_{i^*} < z_{i^*}$, candidate $i^*$ may actually experience a net loss of energy upon winning.
\end{proof}

Using the above result, we can now state our desired result:
\begin{corollary}\label{crly:NE}
    Under the mechanism in (\ref{eq:intuitive_mechanism}), the strategy profile $B^* = \{b_i^*\}_{i\in\mathcal{N}}$, $b_i^* = (T_i^*, P_{\text{tot},i}^*)$, is a Nash Equilibrium.
\end{corollary}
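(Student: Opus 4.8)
The plan is to deduce the corollary from Theorem~\ref{thrm:manifold} by a payoff-equivalence argument. Fix a candidate $i$ and suppose every other agent plays its truthful bid, i.e., $b_j = b_j^* \in \mathcal{B}$ for all $j \in \mathcal{N}\setminus\{i\}$, while the source's reservation bid is $b_0 = (T_0^*, P_{\text{tot},0}^*) = (T^*(z_0), P_{\text{tot}}^*(z_0)) \in \mathcal{B}$. I will show that candidate $i$ has no profitable deviation from $b_i^*$; since $i$ is arbitrary, this makes $B^*$ a Nash equilibrium. The idea is that, against this fixed profile, any bid $b_i$ is payoff-equivalent (for candidate $i$) to a bid lying on $\mathcal{B}$, so it suffices to show that $b_i^*$ is optimal among bids in $\mathcal{B}$ — and then the entire profile lies in $\mathcal{B}$, so Theorem~\ref{thrm:manifold} applies and gives exactly this.

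The argument rests on two observations. First, under the mechanism~(\ref{eq:intuitive_mechanism}) the auction outcome depends on $b_i$ only through its score $U_s(b_i) = -T_i(\lambda + P_{\text{tot},i})$: candidate $i$ wins iff this score is the largest, and when it wins the transmitted pair $X(B) = b_{\hat{i}}$ is the best-scoring element of the fixed set $\{b_j^*\}_{j \ne i} \cup \{b_0\}$, which does not depend on $b_i$ at all. Hence $U_i$ is a function of the scalar $U_s(b_i)$ alone (together with whether $i$ wins). Second, $v(z) = T^*(z)(\lambda + P_{\text{tot}}^*(z))$ is continuous on $(0,\infty)$ by the proofs of Lemmas~\ref{lma:increasing_in_z} and~\ref{lma:P_tot_in_z}, with $v(z) \to 0$ as $z \to 0^+$ (there $T^*(z) \to 0$ while $P_{\text{tot}}^*(z) \le P_{\max}$) and $v(z) \to \infty$ as $z \to \infty$ (there $T^*(z) \to \infty$ and $\lambda > 0$); by the intermediate value theorem its range is all of $(0,\infty)$, which contains every attainable bid cost $T_i(\lambda + P_{\text{tot},i})$. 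Combining the two: given any $b_i$, pick $z$ with $v(z) = T_i(\lambda + P_{\text{tot},i})$; then $(T^*(z), P_{\text{tot}}^*(z)) \in \mathcal{B}$ has the same score as $b_i$ and therefore yields candidate $i$ the same utility. By Theorem~\ref{thrm:manifold}, among all bids in $\mathcal{B}$ the truthful bid $b_i^*$ weakly maximizes $U_i$ against $\{b_j^*\}_{j\ne i} \subseteq \mathcal{B}$; hence no deviation, on or off $\mathcal{B}$, does better, so $b_i^*$ is a best response and $B^*$ is a Nash equilibrium.

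The main obstacle is making the reduction precise: one must verify that $X(B)$ genuinely does not depend on the deviator's own bid once $i$ is the winner — this is exactly where the second-preferred-offer rule $X(B) = b_{\hat{i}}$ is used, and it is also what makes the statement fail for an arbitrary (non-truthful) opponent profile, so that truthfulness is not dominant — and that the boundary limits of $T^*(z)$ and $P_{\text{tot}}^*(z)$ read off from~(\ref{eq:taU_star}) really do make $v$ surjective onto the attainable scores. If one prefers to avoid the surjectivity step, the same conclusion follows from a direct case analysis: letting $z_{(1)} = \min\bigl(\{z_j\}_{j\ne i}\cup\{z_0\}\bigr)$, strict monotonicity of $v$ (the remark following Lemma~\ref{lma:P_tot_in_z}) shows that $i$ wins with its truthful bid iff $z_i \le z_{(1)}$, and that whenever $i$ wins with \emph{any} bid the runner-up is the truthful bidder attaining $z_{(1)}$, so $i$'s utility upon winning is the fixed quantity $T^*(z_{(1)})\,\tilde\alpha_i\,(2^{D/T^*(z_{(1)})}-1)(z_{(1)}-z_i)$, whose sign equals that of $z_{(1)}-z_i$; since losing yields $0$ and ties are handled trivially, truthful bidding secures the better of the two possible payoffs in every case.
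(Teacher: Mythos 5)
Your proposal is correct, and it is genuinely more complete than the paper's own proof, which is a one-line appeal to the facts that $b_i^* \in \mathcal{B}$ and Theorem~\ref{thrm:manifold} holds. Strictly speaking, Theorem~\ref{thrm:manifold} only establishes optimality of $b_i^*$ against deviations \emph{within} $\mathcal{B}$, whereas a Nash equilibrium must rule out profitable deviations over the full bid space; the paper leaves that reduction implicit. You supply it explicitly, in two ways: (i) the payoff-equivalence argument, observing that under the rule $X(B)=b_{\hat{i}}$ the deviator's utility depends on its own bid only through the scalar score $U_s(b_i)$ (the payment upon winning is fixed by the opponents' bids), and that the continuity and monotonicity of $v(z)$ make every attainable score realizable by some element of $\mathcal{B}$, so off-manifold deviations buy nothing; and (ii) the direct case analysis on $z_{(1)}=\min(\{z_j\}_{j\neq i}\cup\{z_0\})$, which proves the corollary from scratch without invoking Theorem~\ref{thrm:manifold} at all, since against truthful opponents the winner's payoff is the fixed quantity $T^*(z_{(1)})\,\tilde\alpha_i\,(2^{D/T^*(z_{(1)})}-1)(z_{(1)}-z_i)$ whose sign matches that of $z_{(1)}-z_i$. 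Your version costs a little extra work (the surjectivity of $v$ onto the attainable scores, which does check out from the limits of (\ref{eq:taU_star})), but it buys the rigor the paper's proof glosses over, and your closing remark correctly isolates why the argument collapses for non-truthful opponent profiles, i.e., why the equilibrium is Nash but not dominant-strategy.
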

\begin{proof}
       This follows from the fact that $b_i^* \in \mathcal{B}$ and Theorem~\ref{thrm:manifold}.
\end{proof}

Thus, under the mechanism given in (\ref{eq:intuitive_mechanism}), none of the candidates have an incentive to deviate from bidding $(T_i^*,P_{\text{tot},i}^*)$ as long as all other candidates are following the same strategy. However, this does not tell us how a candidate will act if the other candidates are not using bids $b_i^*$, and in fact, a bid of $b_i^*$ may result in a net loss in energy. In particular, a winning candidate could lose energy if there were a $T$ and $P$ such that $U_s(T, P) > U_s(T_{i^*}^*, P_{\text{tot},i^*}^*)$ but such that $P - P_{\text{tot}}^{i^*,\text{min}}(T) < 0$.

To see that such a choice is possible, note that $P_{\text{tot}}^{i^*,\text{min}}(T) = (2^{D/T} - 1)z_{i^*}$, and let $P = (2^{D/T} - 1)(z_{i^*} - \epsilon_1) < P_{\text{tot},i}^*$ for $z_{i^*} > \epsilon_1 > 0$. Now let $T = T_i^* + \epsilon_2$ for $\epsilon_2 > 0$, and note that when $\epsilon_2 > T_i^*\epsilon_1/\left((z_i - \epsilon_1) + \lambda\right)$, we have $U_s(T, P) > U_s(T_i^*, P_{\text{tot},i}^*)$. Thus, $b_i^*$ is not a dominant strategy, and the mechanism is not incentive compatible. We next present a modification to (\ref{eq:intuitive_mechanism}) which makes $b_i^*$ a dominant strategy.

% \vspace{-0.1in}
\subsection{A Modified Auction Mechanism}
\vspace{-0.05in}
As an alternative to a second-preferred-offer auction, we propose a modification that maps the second lowest bid into the set $\mathcal{B}$. Let $z(u)$ satisfy $c=U_s(T^*(z), P_{\text{tot}}^*(z))$, that is, $z(u)$ is the inverse of $v(z)$, and this inverse is well defined from Lemmas~\ref{lma:increasing_in_z} and \ref{lma:P_tot_in_z}. The modified mechanism is:
\begin{equation}\label{eq:incentive_compatible_mechanism}
        \begin{split}
        \rho(B) &= i^* = \argmax_{i\in\mathcal{N}\cap \{0\}} U_s(T_i, P_{\text{tot},i})\\
        X^{\text{IC}}(B) &= \left(T^*\left(z\left(c_{\hat{i}}\right)\right), P_{\text{tot}}^*\left(z(u_{\hat{i}})\right)\right),
    \end{split}
    \end{equation}
with $u_{\hat{i}} = U_s(T_{\hat{i}}, P_{\text{tot},\hat{i}})$. By mapping the bid corresponding to the second-lowest score into the set of bids $\mathcal{B}$, we effectively satisfy the condition of Theorem~\ref{thrm:manifold}.
\begin{corollary}
    Under the auction described in (\ref{eq:incentive_compatible_mechanism}), bidding $b_i^*$ is a dominant strategy for all $i\in\mathcal{N}$.
\end{corollary}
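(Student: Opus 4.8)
The plan is to argue directly that, for an arbitrary candidate $i$ and an arbitrary profile of competing bids $b_{-i}$ (which need not be truthful and need not lie in $\mathcal{B}$), the bid $b_i^* = (T_i^*, P_{\text{tot},i}^*)$ maximizes $U_i$. The key structural observation is that under (\ref{eq:incentive_compatible_mechanism}) the source's realized transmission pair always lies on $\mathcal{B}$ and, crucially, is determined entirely by the \emph{competitors'} best score. Set $\hat{u} = \max_{j \in (\mathcal{N}\cup\{0\})\setminus\{i\}} U_s(b_j)$; this is a constant that does not depend on $b_i$. Ignoring ties, candidate $i$ wins if and only if $U_s(b_i) > \hat{u}$, and if $i$ loses then $U_i = 0$ regardless of whether the source or another candidate wins. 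If $i$ wins, then $\hat{i} = \argmax_{j\neq i} U_s(b_j)$ and $u_{\hat{i}} = \hat{u}$, so the payment is the fixed point $(T_p, P_p) := (T^*(z_p), P_{\text{tot}}^*(z_p))$ with $z_p := z(\hat{u})$, well-defined because $U_s$ restricted to $\mathcal{B}$ is strictly monotone in $z$ by Lemmas~\ref{lma:increasing_in_z} and \ref{lma:P_tot_in_z}. In particular, the winner's payment is independent of which winning bid $i$ submitted.

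Next I would evaluate the winner's utility at this fixed point. Using $P_p = (2^{D/T_p}-1)z_p$ and $P_{\text{tot}}^{i,\text{min}}(T_p) = (2^{D/T_p}-1)z_i$, any winning bid by candidate $i$ yields the single value
\[
U_i^{\text{win}} = T_p\,\tilde{\alpha}_i\bigl(2^{D/T_p}-1\bigr)\bigl(z_p - z_i\bigr),
\]
whose sign equals that of $z_p - z_i$. Writing $W(z) := U_s(T^*(z), P_{\text{tot}}^*(z)) = -v(z)$, which is strictly decreasing in $z$, we have $W(z_p) = \hat{u}$ and $W(z_i) = U_s(b_i^*)$, so $z_p \ge z_i$ exactly when $\hat{u} \le U_s(b_i^*)$. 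The case analysis is then short. If $U_s(b_i^*) \ge \hat{u}$, then $b_i^*$ wins and gives $U_i^{\text{win}} \ge 0$, while every alternative bid gives either $0$ (if it loses) or the same $U_i^{\text{win}}$ (if it wins); hence $b_i^*$ is optimal. If $U_s(b_i^*) < \hat{u}$, then $b_i^*$ loses and gives $U_i = 0$, while every alternative gives either $0$ (if it loses) or $U_i^{\text{win}} < 0$ (if it wins); hence $b_i^*$ is again optimal, since now declining to win is the best available outcome. In all cases $b_i^*$ weakly dominates every other bid, which is the claim; the boundary $U_s(b_i^*) = \hat{u}$ is harmless since it forces $U_i^{\text{win}} = 0$.

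The main obstacle is establishing this decoupling: because $b_i$ need not lie in $\mathcal{B}$, Theorem~\ref{thrm:manifold} does not apply verbatim, and one must instead observe that the remapping $X^{\text{IC}}$ in (\ref{eq:incentive_compatible_mechanism}) pushes the realized pair onto $\mathcal{B}$ and makes the winner's payment a function of $\hat{u}$ alone. Once that is in hand, candidate $i$'s decision collapses to the binary choice of whether to win at a price it cannot influence, and the verification reduces to the sign computation for $U_i^{\text{win}}$ and the two-line case split above. A secondary point requiring care is the well-definedness and monotonicity direction of $z(\cdot)$, both of which are already secured by Lemmas~\ref{lma:increasing_in_z}--\ref{lma:P_tot_in_z} and the noted monotonicity of $v$.
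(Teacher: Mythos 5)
Your proof is correct, and it takes a genuinely different---and more self-contained---route than the paper's. The paper disposes of this corollary in one line: it observes that $X^{\text{IC}}$ maps the second-best bid into $\mathcal{B}$ and then invokes Theorem~\ref{thrm:manifold}. That is terse to the point of being incomplete, since the hypothesis of Theorem~\ref{thrm:manifold} is that \emph{all} bids lie in $\mathcal{B}$, whereas here a deviating candidate (and the other candidates) may bid arbitrary pairs; you flag exactly this issue. Your argument instead runs the classical Vickrey logic directly: you isolate the competitors' best score $\hat{u}$ as a quantity candidate $i$ cannot influence, show that the remapping $X^{\text{IC}}$ makes the winner's realized pair $(T^*(z(\hat{u})), P_{\text{tot}}^*(z(\hat{u})))$ depend on $\hat{u}$ alone, compute the winner's utility as $T_p\,\tilde{\alpha}_i(2^{D/T_p}-1)(z_p - z_i)$ whose sign is that of $z_p - z_i$, and translate the win/lose threshold into the comparison $\hat{u} \lessgtr U_s(b_i^*)$ via the strict monotonicity of $v$. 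The two-case split then gives weak dominance against arbitrary opposing profiles. What your approach buys is a proof that actually covers off-manifold deviations and heterogeneous opponent behavior, at the cost of redoing some of the bookkeeping that Theorem~\ref{thrm:manifold} was meant to encapsulate; what the paper's approach buys is brevity, but only by leaning on a theorem whose stated hypotheses are not literally met. One minor point you could tighten: the invertibility of $v$ needs not just strict monotonicity (Lemmas~\ref{lma:increasing_in_z} and~\ref{lma:P_tot_in_z}) but also that $\hat{u}$ lies in the range of $-v$, which holds because $v$ maps $(0,\infty)$ onto $(0,\infty)$ and every admissible score $U_s(T,P) = -T(\lambda+P)$ is negative.
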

\begin{proof}
    Note that (\ref{eq:incentive_compatible_mechanism}) maps $b_{\hat{i}}$ into $\mathcal{B}$, and the result follows from Theorem~\ref{thrm:manifold}.
\end{proof}
Thus, our modified auction is incentive compatible. Due to this property, the behavior of the rational agents in system is well defined, and we next provide a series of numerical results that validate the efficacy of this proposed approach.

\begin{figure}
    \centering
    \includegraphics[width=0.9\linewidth, trim={0.1in, 0.1in, 0.1in, 0.05in}, clip]{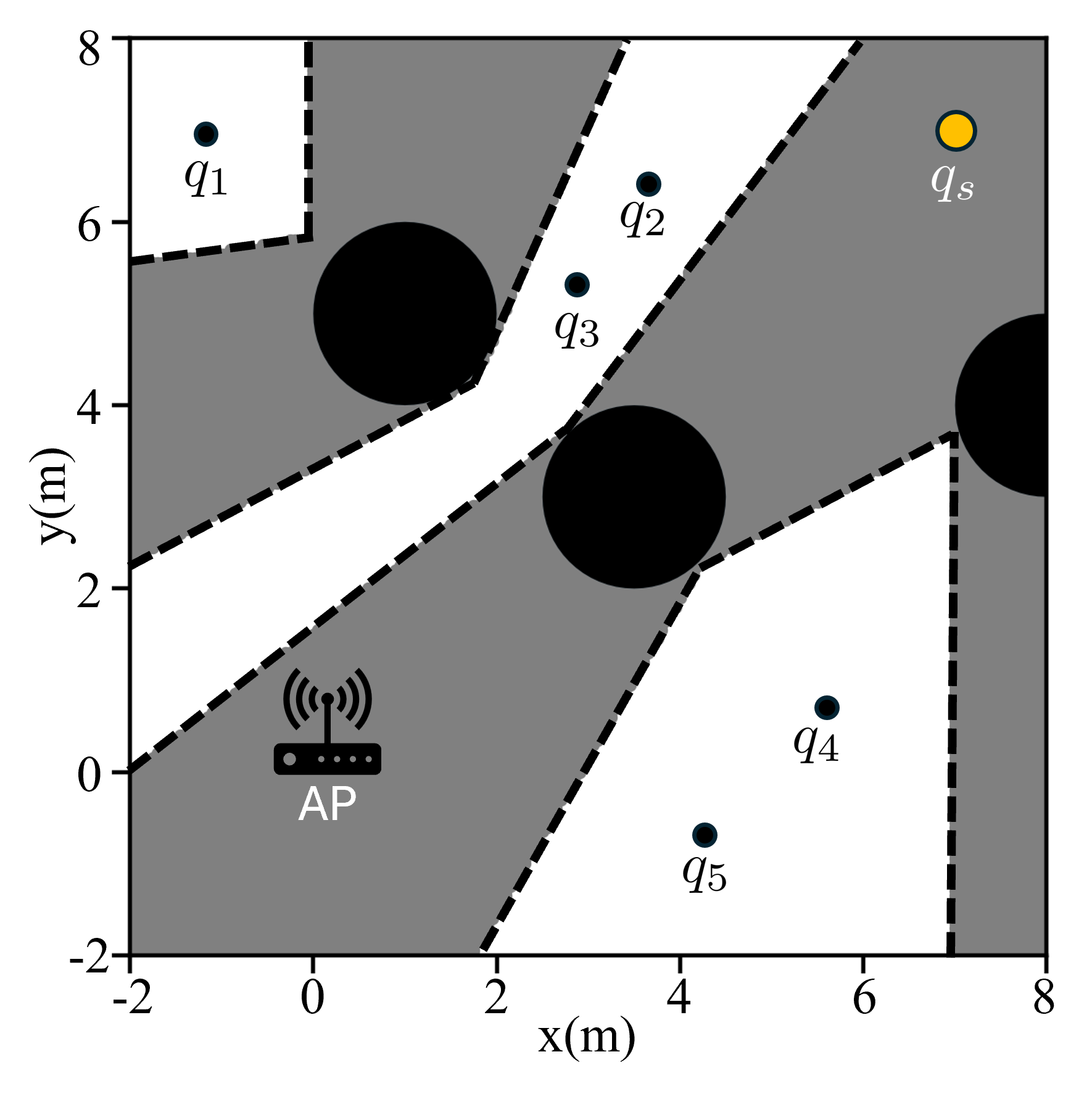}
    \vspace{-0.1in}
    \caption{Setting of numerical results presented in Section~\ref{sec:results}. The source, labeled at $q_s$ must offload data to the AP, located at the origin, but a blockage, shown as a large black circle, creates a deep fade. The regions in grey indicate locations for which a link with the source or with the AP is non-LOS, and we therefore do not consider these as valid locations for candidates. Valid candidate placements regions are in white, and sample candidate locations, $q_i$ for $n=5$, are shown as small black circles.}
    \label{fig:setup}
    \vspace{-0.2in}
\end{figure}

\section{Numerical Results}\label{sec:results}
In this section, we present extensive numerical studies demonstrating the efficacy of our auction-based WPT-enabled relaying protocol. To illustrate this proposed approach, we consider the scenario depicted in Fig.~\ref{fig:setup}, in which a blockage sits on the LOS path between the source and the AP, creating a deep fade. We model the non-LOS channel power between the source and the AP using the same approach as in \cite{Hemadeh2018CST}. Specifically, we have $H_s = K^{\text{NLOS}} d_s^{-\eta^{\text{NLOS}}} H_{s,f}$, where $K^{\text{NLOS}}$ and $\eta^{\text{NLOS}}$ are the path loss intercept and path loss exponent, respectively,  for a non-LOS channel, and $d_s=||q_s||_2$ is the distance between the source and the AP. Additionally, we assume Rayleigh fading, so that the channel fading power, $H_{s,f}$, is exponentially distributed, with $\lambda = 1$ \cite{GoldsmithWC}. We only consider candidates that have a LOS channel to both the AP and the source, as otherwise, they would almost certainly offer no benefit, and we model these channels as $H_{i} =  K^{\text{LOS}} d_s^{-\eta^{\text{LOS}}} H_{s,f}$ and $H_{s,i} =  K^{\text{LOS}} d_{s,i}^{-\eta^{\text{LOS}}} H_{s,f}$, where  $K^{\text{LOS}}$ and $\eta^{\text{NLOS}}$ are the path loss intercept and path loss exponent, respectively,  for a LOS channel, and $d_i=||q_i||_2$ and $d_{s,i} = ||q_i - q_s||_2$. For all simulations, we use the following specific choice of communication channel parameters: $K^{\text{NLOS}}=-25$, $\eta^{\text{NLOS}} = 5.76$, $K^{\text{LOS}}=0$, $\eta^{\text{NLOS}} = 2.5$, and $q_s = (7,7)$. For the data requirement, we set $D = 8\,$bits/Hz. For WPT parameters, we use $A_r = 1\,\text{cm}^2$ and $\alpha = 0.2$.

\begin{figure*}
    \centering
    \includegraphics[width=1\linewidth]{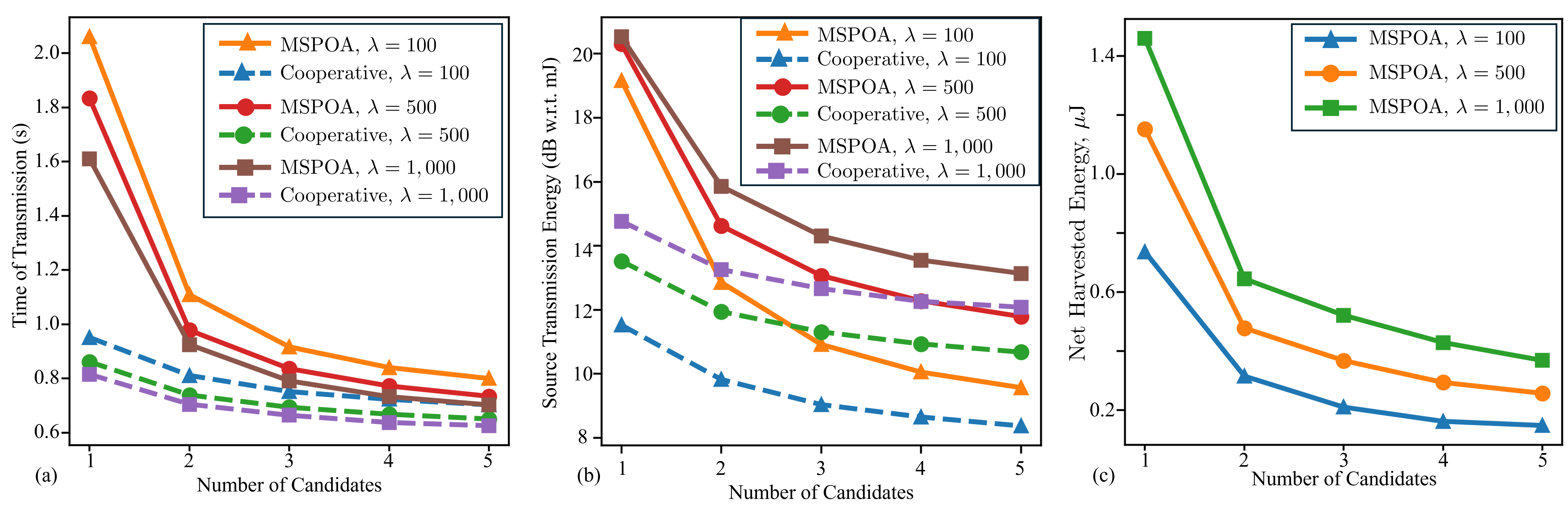}
    \caption{(a) Average transmission duration, $T$, as a function of the number of candidates, $n$. As the number of candidates increases, the transmission time decreases sharply, and greater values of delay power, $\lambda$, lead to shorter transmission times, too. (b) Average millijoules of transmission energy, in dB, as a function of the number of candidates $n$. As the number of candidates increases, the energy usage decreases, and smaller values of delay power, $\lambda$, result in greater energy efficiency. (c) Average net harvested energy by the candidate as a function of the number of candidates in the MSPOA. The harvested energy decreases as the number of candidates increases or as the delay power, $\lambda$, decreases.}
    \label{fig:results}
    \vspace{-0.1in}
\end{figure*}

For a given number of candidates, $n$, and a given delay power, $\lambda$, we run 5,000 experiments. For each experiment, we uniformly and independently sample $q_i$ from the regions with LOS paths to both the source and the AP, and then we independently sample the fading power, $H_{s,f}$, for each channel. We then run both the cooperative baseline described in Section~\ref{sec:auction_base_protocol} and our proposed approach given in (\ref{eq:incentive_compatible_mechanism}), which we refer to as the modified second-preferred-offer auction (MSPOA). When discussing system performance with respect to the choice of $n$ and $\lambda$, we average over these batches of 5,000 experiments. 

\subsection{Transmission Time}
We now examine the impact of the number of candidates, $n$, and the delay power, $\lambda$, on the transmission time, $T$. Fig.~\ref{fig:results} (a) shows that under the VAP, the transmission time decreases greatly as the number of candidates increases. For example, the value of $T$ decreases by over $55\%$ from $n=1$ to $n=3$ when $\lambda = 100$. Furthermore, as $n$ increases, the transmission time in the non-cooperative setting approaches that of the cooperative baseline. We also see that placing greater importance on the timeliness of data transmission (larger values of $\lambda$) leads to shorter transmission times.

\subsection{Energy Consumption}
We next present results on the relationship between key system parameters and and total communication energy. Fig.~\ref{fig:results} (b) again shows an acute decline in energy usage as the number of candidates increases, with as much as a $76\%$ energy reduction for as little as $n=2$ candidates in the non-cooperative setting. The energy usage in the non-cooperative scenario using our proposed approach becomes closer to the energy usage in the cooperative case as $n$ grows larger as well, and we also see that smaller values of delay power, $\lambda$, lead to greater energy efficiency.

\subsection{Energy Harvesting}
Finally, we compare the net energy harvested by a winning candidate across different values of the delay power, $\lambda$, and number of candidates, $n$. Fig.~\ref{fig:results} (c) shows that as the number of candidates increases, the surplus energy harvested beyond what is needed for the candidate to transmit to the AP decreases. We also see that as $\lambda$ increases, the harvested energy increases, indicating that if the source pays a premium for more urgent data transmissions ($\lambda$ is larger).

\section{Conclusions}
This paper demonstrates the effectiveness of multi-attribute auctions for enhancing relaying efficiency in non-cooperative communication systems. We offer a robust framework that significantly reduces energy consumption and transmission delays by strategically selecting relay candidates and optimizing transmission time and power. Our findings reveal that the proposed auction mechanism not only encourages truthful bidding as a dominant strategy but also performs close to a cooperative baseline as the number of relay candidates increases. Robust numerical validation underscores the potential of our approach to improve data offloading in challenging communication environments. This research paves the way for future investigations into auction-based strategies in wireless networks, promising enhanced performance in non-cooperative settings.

\bibliographystyle{IEEEtran}
{\footnotesize
\bibliography{main}}

\end{document}